\providecommand{\tabularnewline}{\\}
\newtheorem{theorem}{Theorem}
\DeclareMathOperator*{\argmin}{arg\,min}
\makeatletter \renewcommand{\@biblabel}[1]{\quad#1.} \makeatother
\date{}
\begin{document}
%
\title{
Improved prediction accuracy for disease risk mapping using Gaussian Process stacked generalisation} \author[1,*]{Samir Bhatt} \author[2]{Ewan Cameron} \author[4]{Seth R Flaxman} \author[2]{Daniel J Weiss} \author[3]{David L Smith}  \author[2]{Peter W Gething}

\affil[1]{Department of Infectious Disease Epidemiology, Imperial College London, London W2 1PG, UK} \affil[2]{Oxford Big Data Institute, Nuffield Department of Medicine, University of Oxford, Oxford OX3 7BN, UK} \affil[3]{Institute for Health Metrics and Evaluation, University of Washington, Seattle, Washington 98121, USA}\affil[4]{Department of Statistics, University of Oxford, 24-29 St Giles, Oxford OX1 3LB, UK}\affil[*]{Corresponding author: bhattsamir@gmail.com}

\date{} 

\maketitle

\begin{abstract} Maps of infectious disease---charting spatial variations in the force of infection, degree of endemicity, and the burden on human health---provide an essential evidence base to support planning towards global health targets. Contemporary disease mapping efforts have embraced statistical modelling approaches to properly acknowledge uncertainties in both the available measurements and their spatial interpolation. The most common such approach is that of Gaussian process regression, a mathematical framework comprised of two components: a mean function harnessing the predictive power of multiple independent variables, and a covariance function yielding spatio-temporal shrinkage against residual variation from the mean. Though many techniques have been developed to improve the flexibility and  fitting of the covariance function, models for the mean function have typically been restricted to simple linear terms. For infectious diseases, known to be driven by complex interactions between environmental and socio-economic factors, improved modelling of the mean function can greatly boost predictive power. Here we present an ensemble approach based on stacked generalisation that allows for multiple, non-linear algorithmic mean functions to be jointly embedded within the Gaussian process framework. We apply this method to mapping \emph{Plasmodium falciparum} prevalence data in Sub-Saharan Africa and show that the generalised ensemble approach markedly out-performs any individual method.
\end{abstract}

\pagebreak

\section*{Author Summary}

Infectious disease mapping provides a powerful synthesis of evidence in an effective, visually condensed form. With the advent of new web-based data sources and systematic data collection in the form of cross-sectional surveys and health facility reporting, there is high demand for accurate methods to predict spatial maps. The primary technique used in spatial mapping is known as Gaussian process regression (GPR). GPR is a flexible stochastic model that not only allows the modelling of disease driving factors such as the environment but can also capture unknown residual spatial correlations in the data.  We introduce a method that blends state-of-the-art machine learning methods with GPR to produce a model that substantially out-performs all other methods commonly used in disease mapping. The utility of this new approach also extends far beyond just mapping, and can be used for general machine learning applications across computational biology, including Bayesian optimisation and mechanistic modelling.

\section*{Introduction}

Infectious disease mapping with model-based geostatistics\cite{Diggle2007Model-basedGeostatistics} can provide a powerful synthesis of the available evidence base to assist surveillance systems and support progress towards global health targets, revealing the geographical bounds of disease occurrence and the spatial patterns of transmission intensity and clinical burden. A recent review found that out of 174 infectious diseases with a strong rational for mapping only 7 (4\%) have thus far been comprehensively mapped\cite{Hay2013GlobalDisease.b}, with a lack of accurate, population representative, geopositioned data having been the primary factor impeding progress. In recent years this has begun to change as increasing volumes of spatially referenced data are collected from both cross-sectional household surveys and web-based data sources (e.g. Health Map\cite{Freifeld2008HealthMap:Reports.}), bringing new opportunities for scaling up the global mapping of diseases. Alongside this surge in new data, novel statistical methods are needed that can generalise to new data accurately while remaining computationally tractable on large datasets.  In this paper we will introduce one such method designed with these aims in mind.

Owing to both a long history of published research in the field and a widespread appreciation amongst endemic countries for the value of cross-sectional household surveys as guides to intervention planning, malaria is an example of a disease that \textit{has} been comprehensively mapped. Over the past decade, volumes of publicly-available malaria prevalence data---defined as the proportion of parasite positive individuals in a sample---have reached sufficiency to allow for detailed spatio-temporal mapping \cite{Bhatt2015The2015.}. From a statistical perspective, the methodological mainstay of these malaria prevalence mapping efforts has been Gaussian process regression\cite{Rasmussen2006GaussianLearning,Bishop2006PatternLearning, Gething2011A2010,Hay2009A2007.}.  Gaussian processes are a flexible semi-parametric regression technique defined entirely through a mean function, $\mu(\cdot)$, and a covariance function, $k(\cdot,\cdot)$. The mean function models an underlying trend, such as the effect of environmental/socio-economic factors, while the covariance function applies a Bayesian shrinkage to residual variation from the mean such that points close to each other in space and time tend towards similar values. The resulting ability of Gaussian processes to strike a parsimonious balance in the weighting of explained and unexplained spatio-temporal variation has led to their near exclusive use in contemporary studies of the geography of malaria prevalence \cite{Hay2009A2007,Gething2011A2010,Diggle2007Model-basedGeostatistics, Gosoniu2012SpatiallyData.,Adigun2015MalariaData.,Bhatt2015The2015.}.

Outside of disease mapping, Gaussian processes have been used for numerous applications in machine learning, including regression\cite{Rasmussen2006GaussianLearning,Bishop2006PatternLearning, Diggle2007Model-basedGeostatistics}, classification\cite{Rasmussen2006GaussianLearning}, and optimisation\cite{Snoek2012PracticalAlgorithms}; their popularity leading to the development of efficient computational techniques and statistical parametrisations. A key challenge for the implementation of Gaussian process models arises in statistical learning (or inference) of any underlying parameters controlling the chosen mean and covariance functions. Learning is typically performed using Markov Chain Monte Carlo (MCMC) or by maximizing the marginal likelihood \cite{Rasmussen2006GaussianLearning}, both of which are made computationally demanding by the need to compute large matrix inverses returned by the covariance function. The complexity of this inverse operation is $\mathcal{O}(n^3)$ in computation and $\mathcal{O}(n^2)$ in storage in the naive case \cite{Rasmussen2006GaussianLearning}, which imposes practical limits on data sizes\cite{Rasmussen2006GaussianLearning}. MCMC techniques may be further confounded by mixing problems in the Markov chains. These challenges have necessitated the use of highly efficient MCMC methods, such as Hamiltonian MCMC\cite{homan2014no}, or posterior approximation approaches, such as the integrated nested Laplace approximation \cite{Rue2009ApproximateApproximations}, expectation propagation\cite{vanhatalo2010approximate,Minka2001ExpectationInference,Rasmussen2006GaussianLearning}, and variational inference\cite{hensman2013gaussian,Opper2009TheRevisited.}. Many of these methods  adopt finite dimensional representations of the covariance function yielding sparse precision matrices, either by specifying a fully independent training conditional (FITC) structure\cite{quinonero2005unifying} or by identifying a Gaussian Markov Random Field (GMRF) in approximation to the continuous process \cite{Lindgren2011AnApproachb}.

Alongside these improved methods for inference, recent research has focussed on model development to increase the flexibility and diversity of parametrisations for the covariance function, with new techniques utilising solutions to stochastic partial differential equations (allowing for easy extensions to non-stationary and anisotropic forms\cite{Lindgren2011AnApproachb}), the combination of kernels additively and multiplicatively\cite{duvenaud2013structure}, and various spectral representations\cite{wilson2013gaussian}.

One aspect of Gaussian processes that has remained largely neglected is the mean function which is often---and indeed with justification in some settings---simply set to zero and ignored. However, in the context of disease mapping, where the biological phenomena are driven by a complex interplay of environmental and socioeconomic factors, the mean plays a central role in improving the predictive performance of Gaussian process models. Furthermore, it has also been shown that using a well-defined mean function can allow for simpler covariance functions (and hence simpler, scalable inference techniques)  \cite{Fuglstad2015DoesFields}.

The steady growth of remotely-sensed data with incredible spatio-temporal richness \cite{Weiss2015Re-examiningApproach.} combined with well-developed biological models\cite{Weiss2014AirPrediction.} has meant that there is a rich suite of environmental and socio-economic covariates currently available. In previous malaria mapping efforts these covariates have been modelled as simple linear predictors \cite{Gething2011A2010,Hay2009A2007,Gosoniu2012SpatiallyData.} that fail to capture complex non-linearities and interactions, leading to a reduced overall predictive performance. Extensive covariate engineering can be performed by introducing large sets of non-linear and interacting transforms of the covariates, but this brute force combinatorial problem quickly becomes computationally inefficient\cite{Weiss2015Re-examiningApproach.,Bhatt2015The2015.}.

In the field of machine learning and data science there has been great success with algorithmic approaches that neglect the covariance and focus on learning from the covariates alone\cite{Hastie2009TheLearning,caruana2006empiricalb}. These include tree based algorithms such as boosting \cite{Friedman2001GreedyMachine} and random forests \cite{BreimanRandomForests}, generalized additive spline models \cite{HastieGam,Wood2006GeneralizedR}, multivariate adaptive regression splines \cite{Friedman1991MultivariateSplines}, and regularized regression models \cite{Zou2005RegularizationNet}. The success of these methods is grounded in their ability to manipulate the bias-variance tradeoff\cite{Geman1992NeuralDilemma}, capture interacting non-linear effects, and perform automatic covariate selection. The technical challenges of hierarchically embedding these algorithmic methods within the Gaussian process framework are forbidding, and many of the approximation methods that make Gaussian process models computationally tractable would struggle with their inclusion. Furthermore, it is unclear which of these approaches would best characterize the mean function when applied across different diseases and settings. In this paper we propose a simplified embedding method based on stacked generalisation \cite{Wolpert1992StackedGeneralization,BreimanStackedRegressions} that focuses on improving the mean function of a Gaussian process, thereby allowing for substantial improvements in the predictive accuracy beyond what has been achieved in the past.

\section*{Results}

\begin{figure} \centering \subfigure[]{\includegraphics[width=0.75\textwidth]{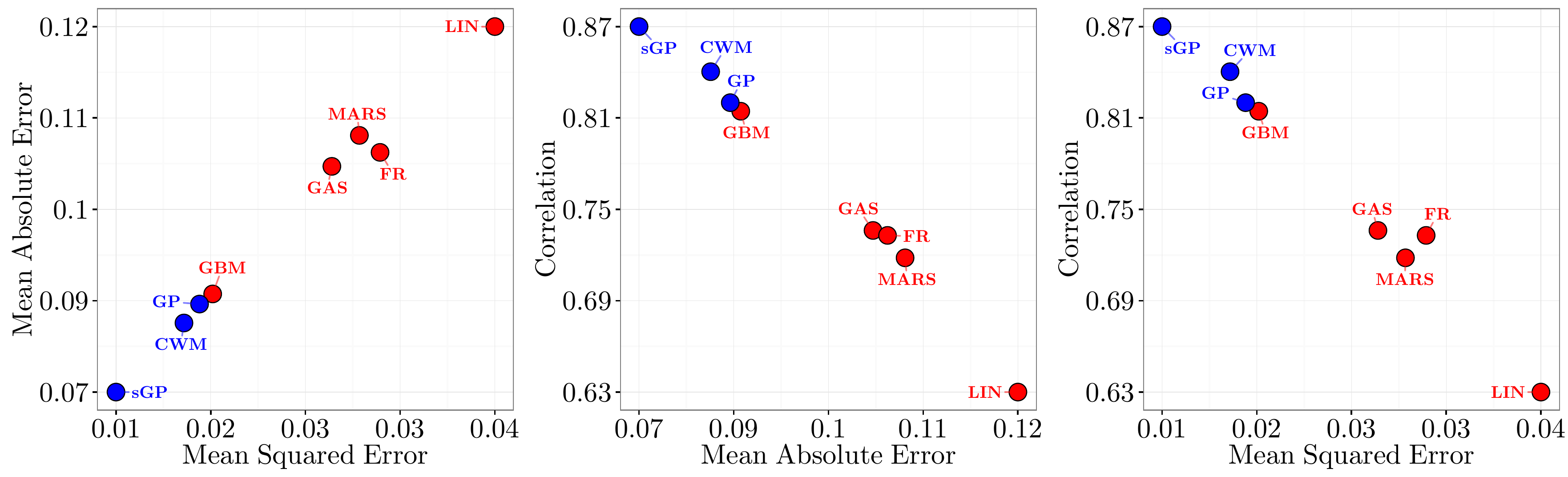}\label{fig:fig2a}} \subfigure[]{\includegraphics[width=0.75\textwidth]{Zone1.pdf}\label{fig:fig2b}} \subfigure[]{\includegraphics[width=0.75\textwidth]{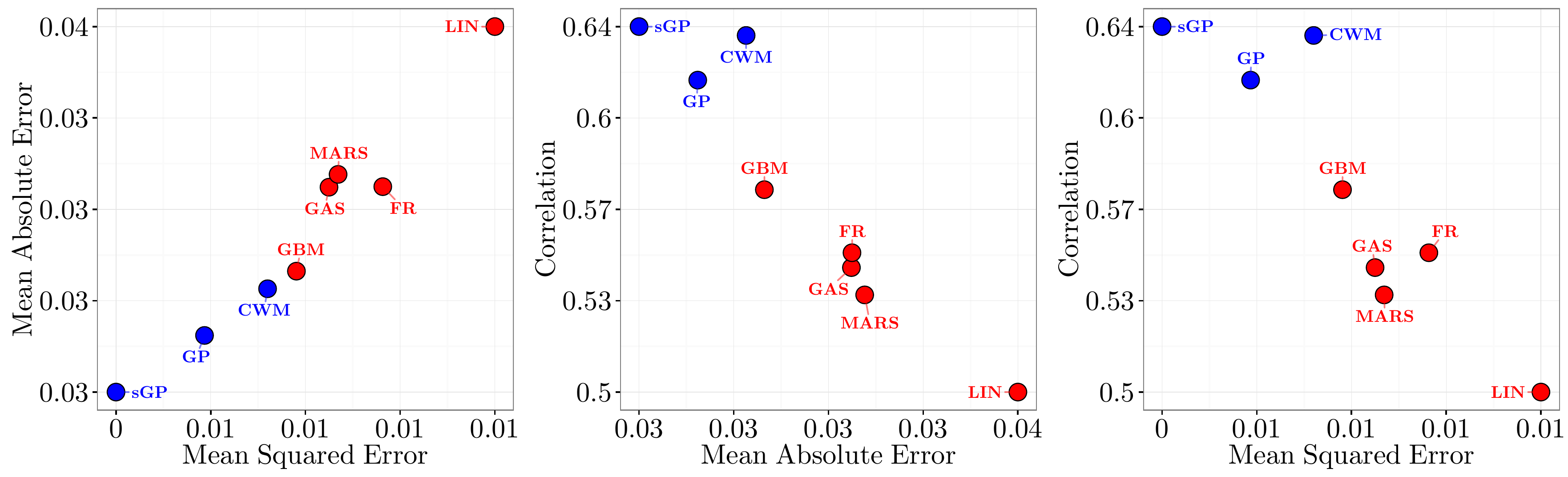}\label{fig:fig2c}} \subfigure[]{\includegraphics[width=0.75\textwidth]{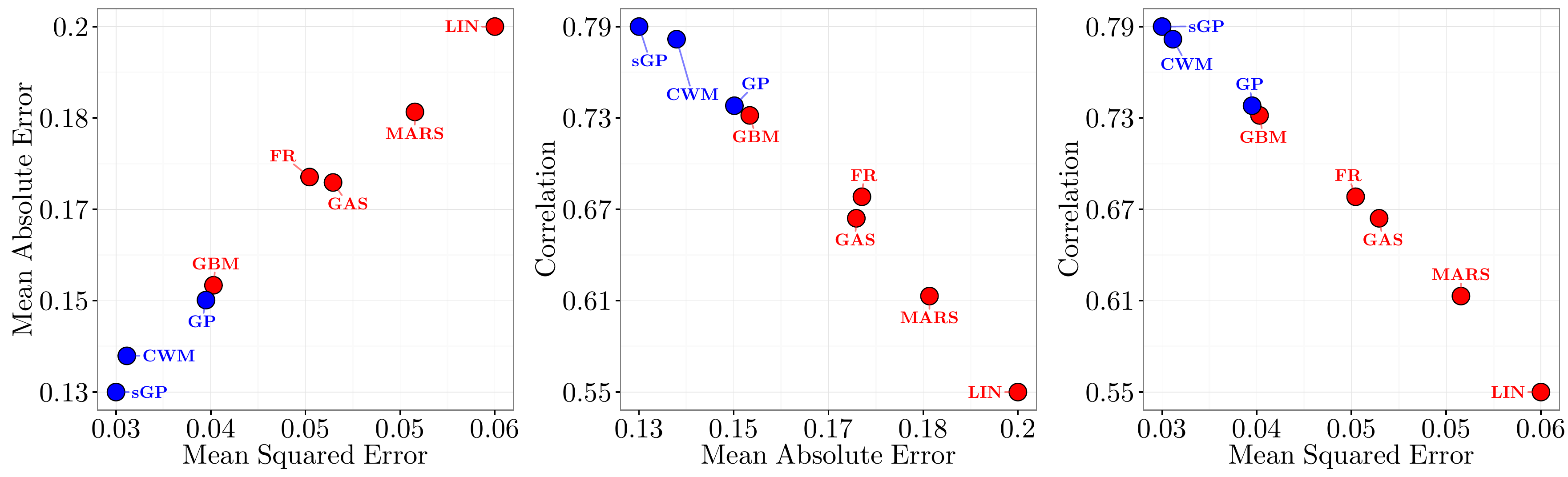}\label{fig:fig2d}} \caption{Comparisons of cross-validation MSE versus MAE versus correlation. Plot (a) is Eastern African, (b) Southern African, (c) North Eastern Africa and (d) Western Africa.  Legend abbreviations: (1) sGP - stacked Gaussian process, (2) CWM - stacked constrained weighted mean, (3) GP - standard Gaussian process, (4) GBM - Gradient boosted trees, (5) GAS - Generalised additive splines, (6) FR - Random forests, (7) MARS - Multivariate adaptive regression splines and (8) LIN - Elastic net regularised linear regression.}\label{fig:fig2} \end{figure}

The results of our analysis are summarised in figure \ref{fig:fig2} where pairwise comparisons of MSE versus MAE versus correlation are shown. We found a consistent ranking pattern in the generalisation performance of the methods across the Eastern, Southern and Western African regions (Figures \ref{fig:fig2a},\ref{fig:fig2b} and \ref{fig:fig2d}), with the stacked Gaussian process approach presented in this paper outperforming all other methods. The constrained weighted mean stacked approach was the next best method followed by the standard Gaussian process (with a linear mean) and Gradient boosted trees. Random forests, multivariate adaptive regression splines and generalised additive splines all had similar performance and the worst performing method was the elastic net regularised regression. For the North Eastern region (Figure \ref{fig:fig2c}), again the stacked Gaussian process approach was the best performing method, but the standard Gaussian process performed better than the constrained weighted mean stacked approach, though only in terms of MAE and MSE.

One average, across all regions, the stacked Gaussian process approach reduced the MAE and MSE by $9\% \; [1\% - 13\%]$ (values in square brackets are the min and max across all regions)  and $16\% \; [2\% - 24\%]$ respectively and increased the correlation by $3\% \; [1\% - 5\%]$ over the next best constrained weighted mean approach thereby empirically reinforcing the theoretical bounds derived in Appendix \ref{proof}. When compared the the widely used elastic net linear regression the relative performance increase of the Gaussian process stacked approach is stark, with reduced MAE and MSE by  $25\% \; [12\% - 33\%]$ and $25\% \; [19\% - 30\%]$ respectively and increase in correlation by $39\% \; [20\% - 50\%]$.

Compared to the standard Gaussian process previously used in malaria mapping the stacked Gaussian process approach reduced MAE and MSE by $10\% \; [3\% - 14\%]$  and $18\% \; [9\% - 26\%]$ respectively and increased the correlation by $6\% \; [3\% - 7\%]$.

Consistently across all regions the best non-stacked method was the standard Gaussian process with a linear mean function. Of the level zero generalisers gradient boosted trees were the best performing method, with performance close to that of  the standard Gaussian process. The standard Gaussian process only had a modest improvement over Gradient boosted trees with average reductions in MAE and MSE of $4\% \; [1\% - 8\%]$  and $7\% \; [1\% - 13\%]$ respectively and increases in correlation of $3\% \; [1\% - 7\%]$.

\section*{Discussion}

All the level zero generalisation methods used in this paper have been previously applied to a diverse set of machine learning problems and have track records of good generalisability \cite{Hastie2009TheLearning}. For example, in closely related ecological applications, these level zero methods have been shown to far surpass classical learning approaches\cite{Elith2008ATrees}. However, as introduced by Wolpert\cite{Wolpert1992StackedGeneralization}, rather than picking one level zero method, an ensemble via a second generaliser has the ability to improve prediction beyond that achievable by the constituent parts\cite{Krogh95neuralnetwork}. Indeed, in all previous applications \cite{BreimanStackedRegressions,Wolpert1992StackedGeneralization,vanderLaan2007SuperLearner,Sill2009Feature-WeightedStacking} ensembling by stacking has consistently produced the best predictive models across a wide range of regression and classification techniques. The most popular level one generaliser is the constrained weighted mean with convex combinations. The key attraction of this level one generaliser is the ease of implementation and theoretical properties \cite{Krogh95neuralnetwork,vanderLaan2007SuperLearner}. In this paper we show that, for disease mapping, stacking using Gaussian processes is more predictive and generalises better than both any single method in isolation, and the more common stacking approach using a constrained weighted mean.

The key benefit of stacking is summarised in equation \ref{eq:biasvariance} where the total error of an ensemble model can be reduced by using multiple, very different, but highly predictive models. However, stacking using a constrained weighted mean only ensures that the predictive power of the covariates are fully utilised, and does not exploit the predictive power that could be gained from characterising any residual covariance structure. The standard Gaussian process suffers from the inverse situation where the covariates are underexploited, and predictive power is instead gained from leveraging residual spatio-temporal covariance. In a standard Gaussian process the mean function is usually paramaterised through simple linear basis functions\cite{Rasmussen2004GaussianLearning} that are often unable to model the complex non linear interactions needed to correctly capture the true underlying mean. This inadequacy is best highlighted by the poor generalisation performance of the elastic net regularised regression method across all regions. The trade off between the variance explained by the covariates versus that explained by the covariance function will undoubtedly vary from setting to setting. For example in the Eastern, Southern and Western African regions, the constrained weighted mean stacking approach performs better than the standard Gaussian process, and the level 0 gradient boosted trees generaliser performs almost as well as the standard Gaussian process. For these regions, this shows a strong influence of the covariates on the underlying process. In contrast, for the North Eastern African region, the standard Gaussian process does better than both the constrained weighted mean approach (in terms of error not correlation) and all of the Level 0 generalisers, suggesting a weak influence of the covariates. However, for all zones, the stacked Gaussian process approach is consistently the best approach across all predictive metrics. By combining both the power of Gaussian processes to characterise a complex covariance structure, and multiple algorithmic approaches to fully exploit the covariates, the stacked Gaussian process approach combines the best of both worlds, and predicts well in all settings.

This paper introduces one way of stacking that is tailored for spatio-temporal data. However the same principles are applicable to purely spatial or purely temporal data, settings in which Gaussian process models excel. Additionally, there is no constraint on the types of level 0 generalisers than can be used; dynamical models of disease transmission e.g.~\cite{Smith2004StaticsMosquitoes.} \cite{Griffin2014EstimatesAfrica} can be fitted to data and used as the mean function within the stacked framework. Using dynamical models in this way can constrain the mean to include known biological mechanisms that can potentially improve generalisability, allow for forecast predictions and help restrict the model to only plausible functions when data is sparse. Finally multiple different stacking schemes can be designed (see Appendix \ref{alternative} for details), and relaxations on linear combinations can be implemented (e.g. \cite{Sill2009Feature-WeightedStacking}). 

Gaussian processes are increasingly being used for expensive optimisation problems \cite{OHagan1991BayesHermiteQuadrature} and Bayesian quadrature \cite{Hennig2015ProbabilisticComputations}. In current implementations both of these applications are limited to low dimensional problems typically with $n<10$. Future work will explore the potential for stacking to extend these approaches to high dimensional settings. The intuition is that the level 0 generalisers can accurately and automatically learn much of the latent structure in the data, including complex features like non-stationarity, which are a challenge for Gaussian processes. Learning this underlying structure through the mean can leave a much simpler residual structure\cite{Fuglstad2015DoesFields} to be modelled by the level one Gaussian process.

In this paper we have focused primarily on prediction, that is neglecting any causal inference and only searching for models with the lowest generalisation error. Determining causality from the complex relationships fitted through the stacked algorithmic approaches is difficult, but empirical methods such as partial dependence  \cite{Friedman2001GreedyMachine} or individual conditional expectation\cite{Goldstein2015PeekingExpectation} plots can be used to approximate the marginal relationships from the various covariates. Similar statistical techniques can also be used to determine covariate importance. 

Increasing volumes of data and computational capacity afford unprecedented opportunities to scale up infectious disease mapping for public health uses\cite{Pigott2015PrioritisingMapping.}.  Maps of diseases and socio-economic indicators are increasingly being used to inform policy \cite{Bhatt2013TheDengue.,Bhatt2015The2015.}, creating demand for methods to produce accurate estimates at high spatial resolutions. Many of these maps can subsequently be used in other models but, in the first instance, creating these maps requires continuous covariates, the bulk of which, come from remotely sensed sources. For many indicators, such as HIV or Tuberculosis, these remotely sensed covariates serve as proxies for complex phenomenon, and as such, the simple mean functions in standard Gaussian processes are insufficient to predict with accuracy and low generalisation error. The stacked Gaussian process approach introduced in this paper provides an intuitive, easy to implement method that predicts accurately through exploiting information in both the covariates and covariance structure.

\section*{Methods}
\subsection*{Gaussian process regression}

We define our response, $\mathbf{y}_{s,t} = \{y_{(s,t)[1]},...,y_{(s,t)[n]}\}$, as a vector of $n$ empirical logit transformed malaria prevalence surveys at location-time pairs, $(s,t)[i]$, with $\mathbf{X}_{s,t}=\{(\mathbf{x}_{1:m})[1],\ldots,(\mathbf{x}_{1:m})[n]\}$ denoting a corresponding $n\times m$ design matrix of $m$ covariates (see Section \ref{covs}). The likelihood of the observed response is $\mathbb{P}(\mathbf{y}_{s,t}|\mathbf{f}_{s,t},\mathbf{X}_{s,t},\theta)$, which we will write simply as $\mathbb{P}(y|f(s,t),\theta)$, suppressing the spatio-temporal indices for ease of notation.  Naturally, $f(s,t)$ $[=\mathbf{f}_{s,t}]$ is the realisation of a Gaussian process with mean function, $\mu_\theta(\cdot)$, and covariance function, $k_\theta(\cdot,\cdot)$, controlled by elements of a low-dimensional vector of hyperparameters, $\theta$.  Formally, the Gaussian process is defined as an $(s,t)$-indexed stochastic process for which the joint distribution over any finite collection of points, $(s,t)[i]$, is multivariate Gaussian with mean vector, $\mu_i = m((s,t)[i]|\theta)$, and covariance matrix, $\Sigma_{i,j}=k((s,t)[i],(s,t)[j]|\theta)$. The Bayesian hierarchy is completed by defining a prior distribution for $\theta$, which may potentially include hyperparameters for the likelihood (e.g., over-dispersion in a beta-binomial) in addition to those on parametrising the mean and covariance functions.  In hierarchical notation, supposing for clarity an iid Normal likelihood with variance, $\sigma_e^2$: 
\begin{eqnarray} \label{eq:eq1} 
\theta & \sim & \pi(\theta)\nonumber\\ 
f(s,t)|\mathbf{X}_{s,t},\theta & \sim & GP(\mu_{\theta},k_{\theta})\nonumber \\ 
y|f(s,t),\mathbf{X}_{s,t},\theta & \sim & N(f(s,t),\mathds{1}\sigma_e^2) 
\end{eqnarray} 
Following Bayes theorem the posterior distribution resulting from this hierarchy becomes: 
\begin{equation}\label{eq:posterior} 
\mathbb{P}(\theta,f(s,t)|y) = \frac{\mathbb{P}(y|f(s,t),\theta) \mathbb{P}(f(s,t)|\theta)\mathbb{P}(\theta)}{\int \int \mathbb{P}(y|f(s,t),\theta) \{d\mathbb{P}(f(s,t)|\theta)\} \{d\mathbb{P}(\theta)\}},
\end{equation} 
where the denominator in Equation \ref{eq:posterior} is the marginal likelihood, $\mathbb{P}(y)$.

Given the hierarchical structure in Equation \ref{eq:eq1} and the conditional properties of Gaussian distributions, the conditional predictive distribution for the mean of observations, $z$ $[=\mathbf{z}_{s^\prime,t^\prime}]$, at location-time pairs, $(s^\prime,t^\prime)[j]$, for a given $\theta$ is also Gaussian with form: 
\begin{eqnarray} 
z|y,\theta &\sim& N(\mu^\ast,\Sigma^\ast) \\ 
\label{eq:eq2a} \mu^\ast &=&  \mu_{(s^\prime,t^\prime)|\theta}+\Sigma_{(s^\prime,t^\prime),(s,t)|\theta}\Sigma_{y|(s,t),\theta}^{-1}\left(y-\mu_{(s,t)|\theta}\right)\\ 
 \label{eq:eq2b} \Sigma^\ast &=&  \Sigma_{(s^\prime,t^\prime)|\theta} - \Sigma_{(s^\prime,t^\prime),(s,t)|\theta}\Sigma_{y|(s,t),\theta}^{-1}\Sigma_{(s,t),(s^\prime,t^\prime)|\theta}
\end{eqnarray} 
where $\Sigma_{y|(s,t),\theta}= \left ( \Sigma_{\theta} +\mathds{1}\sigma_e^2\right)$. For specific details on the parametrisation of $Sigma$ see Appendix

When examining the conditional expectation in equation \ref{eq:eq2b} and splitting the summation into terms $\mu_{(s^\prime,t^\prime)|\theta}$ and $\Sigma_{(s^\prime,t^\prime),(s,t)|\theta}\Sigma_{y|(s,t),\theta}^{-1}\left(y-\mu_{(s,t)|\theta}\right)$,  it is clear that the first specifies a global underlying mean while the second augments the residuals from that mean by the covariance function. Clearly if the mean function fits the data perfectly the covariance in the second term of the expectation would drop out and conversely if the mean function is zero, then only the covariance function would model the data. This expectation therefore represents a balance between the underlying trend and the residual correlated noise.

In most applications of Gaussian process regression a linear mean function ($\mu_{\theta}=\mathbf{X}_{s,t}\mathbf{\beta}$) is used, where $\mathbf{\beta}$ is a vector of $m$ coefficients. However, when a rich suite of covariates is available this linear mean may be sub-optimal, limiting the generalisation accuracy of the model. To improve on the linear mean, covariate basis terms can be expanded to include parametric nonlinear transforms and interactions, but finding the optimal set of bases is computationally demanding and often leaves the researcher open to data snooping\cite{Abu-Mostafa2012LearningData}. In this paper we propose using an alternative two stage statistical procedure to first obtain a set of candidate non-linear mean functions using multiple different algorithmic methods fit without reference to the assumed spatial covariance structure and then include those means in the Gaussian process via stacked generalisation.

\subsection*{Stacked generalisation}\label{stackedgen}

Stacked generalisation \cite{Wolpert1992StackedGeneralization}, also called stacked regression \cite{BreimanStackedRegressions}, is a general ensemble approach to combining different models. In brief stacked generalisers combine different models together to produce a meta-model with equal or better predictive performance than the constituent parts \cite{vanderLaan2007SuperLearner}. In the context of malaria mapping our goal is to fuse multiple algorithmic methods with Gaussian process regression to both fully exploit the information contained in the covariates and model spatio-temporal correlations.

To present stacked generalisation we begin by introducing standard ensemble methods and show that stacked generalisation is simply a special case of this powerful technique. To simplify notation we suppress the spatio-temporal index and dependence on $\theta$. Consider $\mathcal{L}$ models, with outputs $\tilde{y}_i(x),i={1,...,\mathcal{L}}$. The choice of these models is described in appendix \ref{genmethods}. We denote the true target function as $f(x)$ and can therefore write the regression equation as $y_i(x)=f(x) + \epsilon_i(x)$. The average sum-of-squares error for model $i$ is defined as $E_i=\mathbb{E}[(\tilde{y}_i(x)-f(x))^2]$. Our goal is to estimate an ensemble model across all $\mathcal{L}$ models, denoted as $M(\tilde{y}_1,..,\tilde{y}_{\mathcal{L}})$. The simplest choice for $C$ is an average across all models $M(\tilde{y}_1,..,\tilde{y}_{\mathcal{L}})=\tilde{y}_\mathrm{avg}(x)=\frac{1}{\mathcal{L}}\sum_{i=1}^{\mathcal{L}} \tilde{y}_i(x)$. However this average assumes that the error of all models are the same, and that all models perform equally well. The assumption of equal performance may hold when using variants of a single model (i.e. bagging) but is unsuitable when very different models are used. Therefore a simple extension would be to use a weighted mean across models $M(\tilde{y}_1,..,\tilde{y}_{\mathcal{L}})=\tilde{y}_\mathrm{wavg}(x)=\sum_{i=1}^{\mathcal{L}} \beta_i \tilde{y}_i(x)$ subject to constraints $\beta>0 \; \forall i,\sum_{i=1}^{\mathcal{L}}\beta=1$ (convex combinations). These constraints prevent extreme predictions in well predicting models and impose the sensible inequality $\tilde{y}_\mathrm{min}(x)\leq \tilde{y}_\mathrm{wavg}(x)\leq \tilde{y}_\mathrm{max}$ \cite{BreimanStackedRegressions}. The optimal $\beta$s can be found by quadratic programming or by Bayesian linear regression with a Dirichlet/categorical prior on the coefficients. One particularly interesting result of combining models using this constrained weighted mean is the resulting decomposition of error into two terms \cite{Krogh95neuralnetwork} 
\begin{equation} 
\label{eq:biasvariance} 
\begin{split} 
\mathbb{E}[(\tilde{y}_\mathrm{wavg}(x)-f(x))^2] &=\sum_{i=1}^n \beta_i \mathbb{E}[(\tilde{y}_i(x)-f(x))^2] \\ & - \sum_{i=1}^n \beta_i \mathbb{E}[(\tilde{y}_i(x)-\tilde{y}_\mathrm{wavg}(x))^2] 
\end{split} 
\end{equation} Equation \ref{eq:biasvariance} is a reformulating of the standard bias-variance decomposition\cite{Geman1992NeuralDilemma} where first term describes the average error of all models, and the second (termed the ambiguity) is the spread of each member of the ensemble around the weighted mean, and measures the disagreement among models. Equation \ref{eq:biasvariance} shows that combining multiple models with low error but with large disagreements produces a lower overall error. It should be noted that equation \ref{eq:biasvariance} makes the assumption that $y(x)=f(x)$. 

Combination of models in an ensemble as described above can potentially lead to reductions in errors. However the ensemble models introduced so far are based only on training data and therefore neglect the issue of model complexity and tell us nothing about the ability to generalise to new data. To state this differently, the constrained weighted mean model will always allocate the highest weight to the model that most over fits the data. The standard method of addressing this issue is to use cross validation as a measure of the generalisation error and select the best performing of the $\mathcal{L}$ models. Stacked generalisation provides a technique to combine the power of ensembles described above, but also produces models that can generalise well to new data. The principle idea behind stacked generalisation is to train $\mathcal{L}$ models (termed level 0 generalisers) and generalise their combined behaviour via a second model (termed the level 1 generaliser). Practically this is done by specifying a $K$-fold cross validation set, training all $\mathcal{L}$ level 0  models on these sets and using the cross validation predictions to train a level 1 generaliser. This calibrates the level 1 model based on the generalisation ability of the level 0 models. After this level 1 calibration, all level 0 models are refitted using the full data set and these predictions are used in the level 1 model without refitting (This procedure is more fully described in algorithm \ref{alg:stgen} and the schematic design shown in Appendix figure \ref{fig:fig3}). The combination of ensemble modelling with the ability to generalise well has made stacking one of the best methods to achieve state-of-the art predictive accuracy \cite{Puurula2014KaggleSolution,BreimanStackedRegressions, vanderLaan2007SuperLearner}.

Defining the most appropriate level 1 generaliser based on a rigorous optimality criteria is still an open problem, with most applications using the constrained weighted mean specified above \cite{BreimanStackedRegressions,vanderLaan2007SuperLearner}. Using the weighted average approach can be seen as a general case of cross validation, where standard cross validation would select a single model by specifying a single $\beta_i$ as 1 and all other $\beta_i$s as zero. Additionally, it has been shown that using the constrained weighted mean method will perform asymptotically as well as the best possible choice among the family of weight combinations \cite{vanderLaan2007SuperLearner}.

Here we suggest using Gaussian process regression as the level 1 generaliser. Revisiting equation \ref{eq:eq2a} we can replace $\mu_{(s^\prime,t^\prime)|\theta}$ with a linear stacked function $\mu_{(s^\prime,t^\prime)|\theta}=\sum_{i=1}^{\mathcal{L}} \beta_i \tilde{y}_{i}(s^\prime,t^\prime)$ across $\mathcal{L}$ level 0 generalisers, where the subscript denotes predictions from the $i$th Level 0 generaliser (see algorithm \ref{alg:stgen}. We also impose inequality constraints on $\beta_i$ such that $\beta>0 \; \forall i,\sum_{i=1}^{\mathcal{L}}\beta=1$. This constraint allows the $\beta$s to approximately sum to one and helps computational tractability. It should be noted that empirical analysis suggests that simply imposing $\beta>0$ is normally sufficient \cite{BreimanStackedRegressions}.

The intuition in this extended approach is that the stacked mean of the Gaussian process uses multiple different methods to exploit as much predictive capacity from the covariates and then leaves the spatio-temporal residuals to be captured through the Gaussian process covariance function. In Appendix \ref{proof} we show that this approach yields all the benefits of using constrained weighted mean (equation \ref{eq:biasvariance}) but allows for a further reduction in overall error from the covariance function of the Gaussian process.

\subsection*{Data, Covariates and Experimental Design} \label{covs} The hierarchical structure most commonly used in infectious disease mapping is that shown in Equation  \ref{eq:eq1}. In malaria studies our response data are often discrete random variables representing the number of individuals testing positive for the \emph{Plasmodium falciparum}  (\emph{Pf}) malaria parasite, $N^{+}$, out of the total number tested, $N$, at a given location.  If the response is aggregated (e.g.) from the individual household level to a cluster or enumeration area level usually the centroid of the component sites is used as the spatial position datum.  The ratio of $N^{+}$ to $N$ is defined as the parasite rate or prevalence. To aid computation, the response data transformed to the empirical logit of observed prevalence \cite{Bhatt2015The2015.,Diggle2007Model-basedGeostatistics}, which for $N \gtrsim 20$ can be well approximated by a Gaussian likelihood. Pre-modelling standardisation of the available prevalence data for age and diagnostic type has also been performed on the sample used here, as described in depth in \cite{Bhatt2015The2015.,Gething2011A2010}. Our analysis is performed over Sub-Saharan Africa with the study area and dataset partitioned into 4 epidemiologically-distinct regions \cite{Gething2011A2010}---Eastern Africa, Western Africa, North Eastern Africa, and Southern Africa---each of which was modelled separately (see Figure \ref{fig:fig1}.

\begin{figure}[] \centering  \includegraphics[width=0.7\textwidth]{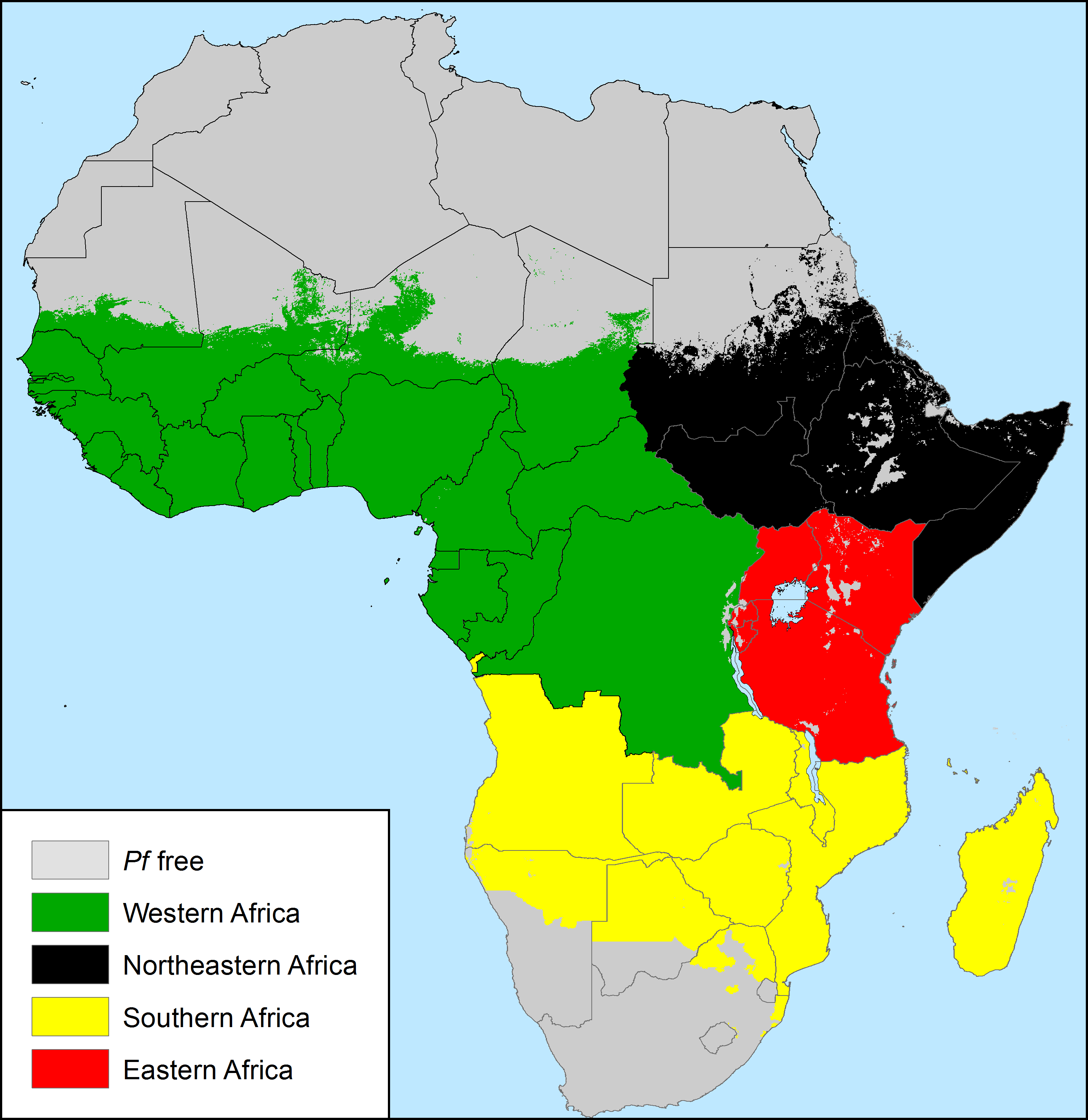} \caption{Study area of stable malaria transmission in Sub-Saharan Africa. Our analysis was performed on 4 zones - Western Africa, North eastern Africa, Eastern Africa and Southern Africa} \label{fig:fig1} \end{figure}

The covariates (i.e., independent variables) used in this research consist of raster layers spanning the entire continent at a 2.5 arc-minute (5 km x 5 km) spatial resolution. The majority of these raster covariates were derived from high temporal resolution satellite images that were first gap filled \cite{Weiss2014AnTime-series.} to eliminate missing data (resulting primarily from persistent cloud cover over equatorial forests) and then aggregated to create a dynamic (i.e. temporally varying) dataset for every month throughout the study period (2000-2015). The list of covariates is presented in Table \ref{table1} and detailed information on individual covariates can be found here \cite{Weiss2015Re-examiningApproach.,Weiss2014AirPrediction., Weiss2014AnTime-series.}. The set of monthly dynamic covariates was further expanded to include lagged versions of the covariate at 2 month, 4 month and 6 month lags. The main objective of this study was to judge the predictive performance of the various generalisation methods and therefore no variable selection or thinning of the covariate set was performed. It should be noted however that many of the level 0 generalisers performed variable selection automatically (e.g. elastic net regression).

The level 0 generalisers used were gradient boosted trees \cite{Friedman2001GreedyMachine,Friedman2002StochasticBoosting}, random forests\cite{breiman2001random}, elastic net regularised regression\cite{Zou2005RegularizationNet}, generalised additive splines \cite{Wood2006GeneralizedR,Hastie2009TheLearning} and multivariate adaptive regression splines\cite{Friedman1991MultivariateSplines}. The level 1 generalisers used were stacking using a constrained weighted mean and stacking using Gaussian process regression. We also fitted a standard Gaussian process for benchmark comparisons with the level 0 and 1 generalisers. Stacked fitting was performed following algorithm \ref{alg:stgen}. Full analysis and $K$-fold Cross validation was performed 5 times and then averaged to reduce any bias from the choices of cross validation set. The averaged cross validation results were used to estimate the generalisation error by calculating the mean squared error (MSE), mean absolute error (MAE) and the correlation.

\pagebreak

\section*{Tables and Algorithms} 

\begin{algorithm} \caption{Stacked Generalisation Algorithm: The algorithm proceeds as follows. In line 2 to 4 the covariates,response and number of cross validation folds is defined. Lines 6 to 9 fits all level 0 generalisers to the full data set. Lines 10 to 16 fits all level 0 generalisers to cross validation data sets. Line 17 to 18 fits a level 1 generaliser to the cross validation predictions and Line 19 returns the final output by using the level 1 generaliser to predict on the full predictions}\label{alg:stgen} \begin{algorithmic}[1] \Procedure{Stack}{}\Comment{covariate and response input} \State \textbf{Input} $X$ as a $n\times m$ design matrix \State \textbf{Input} $y$ as a $n$ vector of responses \State \textbf{Input} $v$ cross validation folds \State \textbf{choose} $l$, $ \mathcal{L}(y,X)$ models\Comment{level zero generalisers} \State \textbf{define} $n\times l$ matrix $P$ \Comment{matrix of predictions} \For{$i\gets 1, l$} \State \textbf{fit} $\mathcal{L}_i(y,X)$ \State \textbf{predict} $P_{\cdot,i} = \mathcal{L}_i(y,X)$ \EndFor

\State \textbf{split} $X,y$ into $\{g_1,...g_v\}$  groups $\{X_{g_1},..,X_{g_v}\}$ and $\{y_{g_1},..,y_{g_v}\}$ \Comment {training set} \State \textbf{add} remaining samples to $\{X_{/g_1},..,X_{/g_v}\}$ and $\{y_{/g_1},..,y_{/g_v}\}$ \Comment {testing set} \State \textbf{define} $n\times l$ matrix $H$ \Comment{matrix cross validation of predictions} \For{$i\gets 1, l$} \For{$j\gets 1, v$} \State \textbf{fit} $\mathcal{L}_i(y_{g_j},X_{g_j})$ \State \textbf{predict} $H_{/g_j,i} = \mathcal{L}_i(y_{/g_j},X_{/g_j})$ \EndFor \EndFor \State \textbf{choose} $\mathcal{L}^*(y,H)$ model \Comment{level one generaliser} \State \textbf{fit} $\mathcal{L}^*(y,H)$  \\ \Return {$\mathcal{L}^*(y,P)$} \Comment{final prediction output} \EndProcedure \end{algorithmic} \end{algorithm}

\begin{table}[H] \caption{List of Environmental, Socio-demographic and Land type covariates used.} \label{table1} \begin{tabular}{|>{\centering}p{4cm}|>{\centering}p{4cm}|>{\centering}p{ 2cm}|>{\centering}p{4cm}|} \hline \multicolumn{1}{|>{\centering}p{4cm}|}{Variable Class} & \multicolumn{3}{c|}{Variable(s)SourceType}\tabularnewline \hline \hline Temperature & Land Surface Temperature (day, night, and diurnal-flux) & MODIS Product & Dynamic Monthly\tabularnewline \hline Temperature Suitability & Temperature Suitability for Plasmodium falciparum & Modeled Product  & Dynamic Monthly\tabularnewline \hline Precipitation & Mean Annual Precipitation & WorldClim  & Synoptic\tabularnewline \hline Vegetation Vigor & Enhanced Vegetation Index & MODIS Derivative & Dynamic Monthly\tabularnewline \hline Surface Wetness & Tasseled Cap Wetness & MODIS Derivative & Dynamic Monthly\tabularnewline \hline Surface Brightness  & Tasseled Cap Brightness  & MODIS Derivative & Dynamic Monthly\tabularnewline \hline IGBP Landcover & Fractional Landcover  & MODIS Product & Dynamic Annual\tabularnewline \hline IGBP Landcover Pattern & Landcover Patterns  & MODIS Derivative & Dynamic Annual\tabularnewline \hline Terrain Steepness & SRTM Derivatives & MODIS Product & Static\tabularnewline \hline Flow \& Topographic Wetness & Topographically Redistributed Water & SRTM Derivatives & Static\tabularnewline \hline Elevation & Digital Elevation Model & SRTM & Static\tabularnewline \hline Human Population & AfriPop  & Modeled Products & Dynamic Annual\tabularnewline \hline Infrastructural Development & Accessibility to Urban Centers and Nighttime Lights & Modeled Product and VIIRS  & Static\tabularnewline \hline Moisture Metrics & Aridity and Potential Evapotranspiration & Modeled Products & Synoptic\tabularnewline \hline \end{tabular} \end{table}

\nolinenumbers

\bibliography{Mendeley.bib}

\begin{thebibliography}{10}

\bibitem{Diggle2007Model-basedGeostatistics}
Diggle P, Ribeiro P.
\newblock {Model-based Geostatistics}.
\newblock New York: Springer; 2007.

\bibitem{Hay2013GlobalDisease.b}
Hay SI, Battle KE, Pigott DM, Smith DL, Moyes CL, Bhatt S, et~al.
\newblock {Global mapping of infectious disease.}
\newblock Philosophical transactions of the Royal Society of London Series B,
  Biological sciences. 2013 3;368(1614):20120250.
\newblock Available from:
  \url{http://rstb.royalsocietypublishing.org/content/368/1614/20120250}.

\bibitem{Freifeld2008HealthMap:Reports.}
Freifeld CC, Mandl KD, Reis BY, Brownstein JS.
\newblock {HealthMap: global infectious disease monitoring through automated
  classification and visualization of Internet media reports.}
\newblock Journal of the American Medical Informatics Association : JAMIA. 2008
  1;15(2):150--7.
\newblock Available from:
  \url{http://www.sciencedirect.com/science/article/pii/S1067502707003441}.

\bibitem{Bhatt2015The2015.}
Bhatt S, Weiss DJ, Cameron E, Bisanzio D, Mappin B, Dalrymple U, et~al.
\newblock {The effect of malaria control on Plasmodium falciparum in Africa
  between 2000 and 2015.}
\newblock Nature. 2015 9;526(7572):207--211.
\newblock Available from: \url{http://dx.doi.org/10.1038/nature15535}.

\bibitem{Rasmussen2006GaussianLearning}
Rasmussen CE, Williams CKI.
\newblock {Gaussian processes for machine learning}. vol.~14; 2006.
\newblock Available from: \url{http://www.ncbi.nlm.nih.gov/pubmed/15112367}.

\bibitem{Bishop2006PatternLearning}
Bishop CM.
\newblock {Pattern Recognition and Machine Learning}. vol.~4; 2006.
\newblock Available from:
  \url{http://www.library.wisc.edu/selectedtocs/bg0137.pdf}.

\bibitem{Gething2011A2010}
Gething PW, Patil AP, Smith DL, Guerra Ca, Elyazar IRF, Johnston GL, et~al.
\newblock {A new world malaria map: Plasmodium falciparum endemicity in 2010}.
\newblock Malaria journal. 2011 1;10(1):378.
\newblock Available from: \url{http://eprints.soton.ac.uk/344413/}.

\bibitem{Hay2009A2007.}
Hay SI, Guerra CA, Gething PW, Patil AP, Tatem AJ, Noor AM, et~al.
\newblock {A world malaria map: Plasmodium falciparum endemicity in 2007.}
\newblock PLoS Med. 2009;6.

\bibitem{Hay2009A2007}
Hay SI, Guerra Ca, Gething PW, Patil AP, Tatem AJ, Noor AM, et~al.
\newblock {A world malaria map: plasmodium falciparum endemicity in 2007}.
\newblock PLoS Medicine. 2009 3;6(3):0286--0302.

\bibitem{Gosoniu2012SpatiallyData.}
Gosoniu L, Msengwa A, Lengeler C, Vounatsou P.
\newblock {Spatially explicit burden estimates of malaria in Tanzania: bayesian
  geostatistical modeling of the malaria indicator survey data.}
\newblock PloS one. 2012 1;7(5):e23966.
\newblock Available from:
  \url{http://journals.plos.org/plosone/article?id=10.1371/journal.pone.0023966}.

\bibitem{Adigun2015MalariaData.}
Adigun AB, Gajere EN, Oresanya O, Vounatsou P.
\newblock {Malaria risk in Nigeria: Bayesian geostatistical modelling of 2010
  malaria indicator survey data.}
\newblock Malaria journal. 2015 1;14(1):156.
\newblock Available from:
  \url{http://malariajournal.biomedcentral.com/articles/10.1186/s12936-015-0683-6}.

\bibitem{Snoek2012PracticalAlgorithms}
Snoek J, Larochelle H, Adams RP.
\newblock {Practical Bayesian Optimization of Machine Learning Algorithms}.
\newblock In: Advances in Neural Information Processing Systems; 2012. p.
  2951--2959.
\newblock Available from: \url{http://papers.nips.cc/paper/4522-practical}.

\bibitem{homan2014no}
Hoffman MD, Gelman A.
\newblock {The no-U-turn sampler: Adaptively setting path lengths in
  Hamiltonian Monte Carlo}.
\newblock The Journal of Machine Learning Research. 2014;15(1):1593--1623.

\bibitem{Rue2009ApproximateApproximations}
Rue H, Martino S, Chopin N.
\newblock {Approximate Bayesian inference for latent Gaussian models by using
  integrated nested Laplace approximations}.
\newblock Journal of the royal statistical {\ldots}. 2009;Available from:
  \url{http://onlinelibrary.wiley.com/doi/10.1111/j.1467-9868.2008.00700.x/pdf}.

\bibitem{vanhatalo2010approximate}
Vanhatalo J, Pietil{\"{a}}inen V, Vehtari A.
\newblock {Approximate inference for disease mapping with sparse Gaussian
  processes}.
\newblock Statistics in medicine. 2010;29(15):1580--1607.

\bibitem{Minka2001ExpectationInference}
Minka TP.
\newblock {Expectation propagation for approximate Bayesian inference}. 2001
  8;p. 362--369.
\newblock Available from:
  \url{http://dl.acm.org/citation.cfm?id=2074022.2074067}.

\bibitem{hensman2013gaussian}
Hensman J, Fusi N, Lawrence ND.
\newblock {Gaussian processes for big data}.
\newblock arXiv preprint arXiv:13096835. 2013;.

\bibitem{Opper2009TheRevisited.}
Opper M, Archambeau C.
\newblock {The variational gaussian approximation revisited.}
\newblock Neural computation. 2009 3;21(3):786--92.
\newblock Available from: \url{http://www.ncbi.nlm.nih.gov/pubmed/18785854}.

\bibitem{quinonero2005unifying}
Qui{\~{n}}onero-Candela J, Rasmussen CE.
\newblock {A unifying view of sparse approximate Gaussian process regression}.
\newblock The Journal of Machine Learning Research. 2005;6:1939--1959.

\bibitem{Lindgren2011AnApproachb}
Lindgren F, Rue H, Lindstr{\"{o}}m J.
\newblock {An explicit link between gaussian fields and gaussian markov random
  fields: The stochastic partial differential equation approach}.
\newblock Journal of the Royal Statistical Society Series B: Statistical
  Methodology. 2011;73(4):423--498.

\bibitem{duvenaud2013structure}
Duvenaud D, Lloyd JR, Grosse R, Tenenbaum JB, Ghahramani Z.
\newblock {Structure discovery in nonparametric regression through
  compositional kernel search}.
\newblock arXiv preprint arXiv:13024922. 2013;.

\bibitem{wilson2013gaussian}
Wilson AG, Adams RP.
\newblock {Gaussian process kernels for pattern discovery and extrapolation}.
\newblock arXiv preprint arXiv:13024245. 2013;.

\bibitem{Fuglstad2015DoesFields}
Fuglstad GA, Simpson D, Lindgren F, Rue H.
\newblock {Does non-stationary spatial data always require non-stationary
  random fields?}
\newblock Spatial Statistics. 2015 10;14:505--531.
\newblock Available from:
  \url{http://www.sciencedirect.com/science/article/pii/S2211675315000780}.

\bibitem{Weiss2015Re-examiningApproach.}
Weiss DJ, Mappin B, Dalrymple U, Bhatt S, Cameron E, Hay SI, et~al.
\newblock {Re-examining environmental correlates of Plasmodium falciparum
  malaria endemicity: a data-intensive variable selection approach.}
\newblock Malaria journal. 2015 1;14(1):68.
\newblock Available from: \url{http://www.malariajournal.com/content/14/1/68}.

\bibitem{Weiss2014AirPrediction.}
Weiss DJ, Bhatt S, Mappin B, Van~Boeckel TP, Smith DL, Hay SI, et~al.
\newblock {Air temperature suitability for Plasmodium falciparum malaria
  transmission in Africa 2000-2012: a high-resolution spatiotemporal
  prediction.}
\newblock Malaria journal. 2014 1;13(1):171.
\newblock Available from: \url{http://www.malariajournal.com/content/13/1/171}.

\bibitem{Hastie2009TheLearning}
Hastie T, Tibshirani R, Friedman JH.
\newblock {The elements of statistical learning}.
\newblock Springer; 2009.

\bibitem{caruana2006empiricalb}
Caruana R, Niculescu-Mizil A.
\newblock {An empirical comparison of supervised learning algorithms}.
\newblock In: Proceedings of the 23rd international conference on Machine
  learning. ACM; 2006. p. 161--168.

\bibitem{Friedman2001GreedyMachine}
Friedman JH.
\newblock {Greedy function approximation: a gradient boosting machine}.
\newblock Annals of Statistics. 2001;p. 1189--1232.

\bibitem{BreimanRandomForests}
Breiman L.
\newblock {Random Forests}.
\newblock Machine Learning;45(1):5--32.
\newblock Available from:
  \url{http://link.springer.com/article/10.1023/A{\%}3A1010933404324}.

\bibitem{HastieGam}
Trevor~Hastie RT.
\newblock {Generalized Additive Models}.
\newblock Statistical Science. 1986;1(3):297--310.
\newblock Available from: \url{http://www.jstor.org/stable/2245459}.

\bibitem{Wood2006GeneralizedR}
Wood S.
\newblock {Generalized Additive Models: An Introduction with R}.
\newblock CRC Press; 2006.
\newblock Available from:
  \url{https://books.google.com/books?hl=en{\&}lr={\&}id=GbzXe-L8uFgC{\&}pgis=1}.

\bibitem{Friedman1991MultivariateSplines}
Friedman JH.
\newblock {Multivariate Adaptive Regression Splines}.
\newblock The Annals of Statistics. 1991 3;19(1):1--67.
\newblock Available from: \url{http://projecteuclid.org/euclid.aos/1176347963}.

\bibitem{Zou2005RegularizationNet}
Zou H, Hastie T.
\newblock {Regularization and variable selection via the elastic net}.
\newblock Journal of the Royal Statistical Society: Series B (Statistical
  Methodology). 2005 4;67(2):301--320.
\newblock Available from:
  \url{http://doi.wiley.com/10.1111/j.1467-9868.2005.00503.x}.

\bibitem{Geman1992NeuralDilemma}
Geman S, Bienenstock E, Doursat R.
\newblock {Neural networks and the bias/variance dilemma}.
\newblock Neural computation. 1992;Available from:
  \url{http://www.mitpressjournals.org/doi/abs/10.1162/neco.1992.4.1.1}.

\bibitem{Wolpert1992StackedGeneralization}
Wolpert DH.
\newblock {Stacked generalization}.
\newblock Neural Networks. 1992 1;5(2):241--259.
\newblock Available from:
  \url{http://www.sciencedirect.com/science/article/pii/S0893608005800231}.

\bibitem{BreimanStackedRegressions}
Breiman L.
\newblock {Stacked Regressions}.
\newblock Machine Learning;24(1):49--64.
\newblock Available from:
  \url{http://link.springer.com/article/10.1023/A{\%}3A1018046112532}.

\bibitem{Elith2008ATrees}
Elith J, Leathwick JR, Hastie T.
\newblock {A working guide to boosted regression trees}.
\newblock Journal of Animal Ecology. 2008;77(4):802--813.

\bibitem{Krogh95neuralnetwork}
Krogh A, Vedelsby J.
\newblock {Neural Network Ensembles, Cross Validation, and Active Learning}.
\newblock In: Advances in Neural Information Processing Systems. MIT Press;
  1995. p. 231--238.

\bibitem{vanderLaan2007SuperLearner}
van~der Laan MJ, Polley EC, Hubbard AE.
\newblock {Super Learner}.
\newblock Statistical Applications in Genetics and Molecular Biology. 2007
  1;6(1):Article25.
\newblock Available from: \url{http://www.ncbi.nlm.nih.gov/pubmed/17910531}.

\bibitem{Sill2009Feature-WeightedStacking}
Sill J, Takacs G, Mackey L, Lin D.
\newblock {Feature-Weighted Linear Stacking}. 2009 11;Available from:
  \url{http://arxiv.org/abs/0911.0460}.

\bibitem{Rasmussen2004GaussianLearning}
Rasmussen C.
\newblock {Gaussian processes in machine learning}.
\newblock Advanced Lectures on Machine Learning. 2004;p. 63--71.

\bibitem{Smith2004StaticsMosquitoes.}
Smith DL, McKenzie FE.
\newblock {Statics and dynamics of malaria infection in Anopheles mosquitoes.}
\newblock Malaria journal. 2004 6;3(1):13.
\newblock Available from:
  \url{http://malariajournal.biomedcentral.com/articles/10.1186/1475-2875-3-13}.

\bibitem{Griffin2014EstimatesAfrica}
Griffin JT, Ferguson NM, Ghani AC.
\newblock {Estimates of the changing age-burden of Plasmodium falciparum
  malaria disease in sub-Saharan Africa}.
\newblock Nature communications. 2014;5.

\bibitem{OHagan1991BayesHermiteQuadrature}
O'Hagan A.
\newblock {Bayes–Hermite quadrature}.
\newblock Journal of Statistical Planning and Inference. 1991
  11;29(3):245--260.
\newblock Available from:
  \url{http://linkinghub.elsevier.com/retrieve/pii/037837589190002V}.

\bibitem{Hennig2015ProbabilisticComputations}
Hennig P, Osborne MA, Girolami M.
\newblock {Probabilistic numerics and uncertainty in computations}.
\newblock Proceedings of the Royal Society of London A: Mathematical, Physical
  and Engineering Sciences. 2015;471(2179).

\bibitem{Goldstein2015PeekingExpectation}
Goldstein A, Kapelner A, Bleich J, Pitkin E.
\newblock {Peeking Inside the Black Box: Visualizing Statistical Learning With
  Plots of Individual Conditional Expectation}.
\newblock Journal of Computational and Graphical Statistics. 2015
  3;24(1):44--65.

\bibitem{Pigott2015PrioritisingMapping.}
Pigott DM, Howes RE, Wiebe A, Battle KE, Golding N, Gething PW, et~al.
\newblock {Prioritising Infectious Disease Mapping.}
\newblock PLoS neglected tropical diseases. 2015 6;9(6):e0003756.
\newblock Available from:
  \url{http://journals.plos.org/plosntds/article?id=10.1371/journal.pntd.0003756}.

\bibitem{Bhatt2013TheDengue.}
Bhatt S, Gething PW, Brady OJ, Messina JP, Farlow AW, Moyes CL, et~al.
\newblock {The global distribution and burden of dengue.}
\newblock Nature. 2013 4;496(7446):504--7.
\newblock Available from: \url{http://dx.doi.org/10.1038/nature12060}.

\bibitem{Abu-Mostafa2012LearningData}
Abu-Mostafa Y, Magdon-Ismail M, Lin H.
\newblock {Learning from data}; 2012.
\newblock Available from: \url{http://work.caltech.edu/homework/final.pdf}.

\bibitem{Puurula2014KaggleSolution}
Puurula A, Read J, Bifet A.
\newblock {Kaggle LSHTC4 Winning Solution}. 2014 5;Available from:
  \url{http://arxiv.org/abs/1405.0546}.

\bibitem{Weiss2014AnTime-series.}
Weiss DJ, Atkinson PM, Bhatt S, Mappin B, Hay SI, Gething PW.
\newblock {An effective approach for gap-filling continental scale remotely
  sensed time-series.}
\newblock ISPRS journal of photogrammetry and remote sensing : official
  publication of the International Society for Photogrammetry and Remote
  Sensing (ISPRS). 2014 12;98:106--118.
\newblock Available from:
  \url{http://www.sciencedirect.com/science/article/pii/S0924271614002512}.

\bibitem{Friedman2002StochasticBoosting}
Friedman JH.
\newblock {Stochastic gradient boosting}.
\newblock Computational Statistics {\&} Data Analysis. 2002;38(4):367--378.

\bibitem{breiman2001random}
Breiman L.
\newblock {Random forests}.
\newblock Machine learning. 2001;45(1):5--32.

\bibitem{Breiman1984ClassificationTrees}
Breiman L.
\newblock {Classification and regression trees}.
\newblock Chapman {\&} Hall/CRC; 1984.

\bibitem{Hastie1990GeneralizedModels}
Hastie TJ, Tibshirani R.
\newblock {Generalized additive models}. vol.~1; 1990.
\newblock Available from:
  \url{http://scholar.google.com/scholar?hl=en{\&}btnG=Search{\&}q=intitle:Generalized+additive+models{\#}0}.

\bibitem{Lindgren2011AnApproach}
Lindgren F, Rue H, Lindstr{\"{o}}m J.
\newblock {An explicit link between Gaussian fields and Gaussian Markov random
  fields: the stochastic partial differential equation approach}.
\newblock Journal of the Royal Statistical Society: Series B (Statistical
  Methodology). 2011;73(4):423--498.

\end{thebibliography}
\pagebreak

\subsection*{Funding Statement}
SB is supported by the MRC outbreak centre.
DLS was supported by the Bill and Melinda Gates Foundation 
$[OPP1110495]$, National Institutes of Health/National Institute
of Allergy and Infectious Diseases $[U19AI089674]$, and the
Research and Policy for Infectious Disease Dynamics (RAPIDD)
program of the Science and Technology Directorate, Department of
Homeland Security, and the Fogarty International Center, National
Institutes of Health. PWG is a Career Development Fellow $[K00669X]$ jointly funded by the UK Medical Research Council (MRC) and the UK Department for International Development (DFID) under the MRC/DFID Concordat agreement, also part of the EDCTP2 programme supported by the European Union, and receives support from the Bill and Melinda Gates Foundation $[OPP1068048, OPP1106023]$. These grants also support DJW, and EC.

\pagebreak

\section{S1 - Appendix}

\subsection{S1 - Generalisation methods}\label{genmethods}

We choose 5 level 0 generalisers to feed into the level 1 generaliser. We chose these 5 due to (a) ease of implementation through existing software packages, (b) the differences in their approaches and (c) a proven track record in predictive accuracy.

\subsubsection{Gradient boosted trees and Random forests}

Both Gradient boosted trees and random forests produce ensembles of regression trees \cite{Breiman1984ClassificationTrees}. Regression trees partition the space of all joint covariate variable values into disjoin regions $R_j$ ($j=\{1,2,...,J\}$) which are represented as terminal nodes in a decision tree. A constant $\gamma_j$ is assigned to each region such that the predictive rule is $x\in R_j \rightarrow f(x) = \gamma_j$ \cite{Hastie2009TheLearning}. A tree is therefore formally expressed as $T(x,\theta)=\sum_{j=1}^J \gamma_j \mathbb{I}(x\in R_j)$.

Gradient boosted trees model the target function by a sum of such trees induced by forward stagewise regression \begin{eqnarray} \label{eq:gbm} f_M(x)&=&\sum_{m=1}^M T(x,\theta_m)\\ \hat{\theta}_m&=&\argmin_{\theta_m}\sum_{i=1}^N L(y_i,f_{m-1}(x_i) + T(x_i,\theta_m)) \end{eqnarray} Solving equation \ref{eq:gbm} is done by functional gradient decent with regularisation performed via shrinkage and cross validation \cite{Friedman2001GreedyMachine}. In our implementations we also stochastically sampled covariates and samples in each stagewise step \cite{Friedman2002StochasticBoosting}.

Random forests combine trees by bagging \cite{Hastie2009TheLearning} where bootstrap samples ($B$) of covariates and data are used to create an average ensemble of trees:

\begin{equation} \label{rf} f_M(x) = \frac{1}{B}\sum_{b=1}^B T(x,\theta_b) \end{equation}

The optimal number of bootstraped trees $B$ is found by cross validation\cite{BreimanRandomForests}.

We used the H2O package in R to fit both the gradient boosted models and random forest models and meta-parameters such as tree depth and samples per node were evaluated using a coarse grid search.

\subsubsection{Elastic net regularised regression}

Elastic net regularised regression \cite{Zou2005RegularizationNet} is a penalised linear regression where the coefficients of the regression are found by 
\begin{eqnarray}
\label{elastic} f_M(x) &=& X^T\hat{\beta}\\ \hat{\beta}&=&argmin_\beta||y-f_M(x)||^2 + \lambda_2||\beta||^2 + \lambda_1||\beta||_1 
\end{eqnarray} Where the subscripts on the penalty terms represent the $\ell_1$ (i.e. lasso) and $\ell_2$ (i.e. ridge) norms. These norms induce shrinkage in the coefficients of the linear model allowing for better generalisation.

Equation \ref{elastic} is fitted by numerically maximising the likelihood and optimal parameters for $\lambda_1,\lambda_2$ are computed by cross validation over the full regularisation path. Fitting was done using the H2O package in R.

\subsubsection{Generalised additive splines}

Generalised additive splines extend standard linear models by allowing the linear terms to be modelled as nonlinear flexible splines \cite{Hastie1990GeneralizedModels}.
\begin{eqnarray}
\label{gam} f_M(x) &=& \beta_0 + f_1(x_1)+,..,+f_m(x_m)\\ &s.t.& argmin_\beta||y-f_M(x)||^2 + \lambda \int (f^{''}_M(x))^2 dx 
\end{eqnarray}
Where $f^{''}$ denotes the second derivative and penalises non smooth functions (that can potentially overfit). Fitting was done using generalised cross validation with smoothness selection done via restricted maximum likelihood. Fitting was done using the mgcv package in R.

\subsubsection*{Multivariate adaptive regression splines} Multivariate adaptive regression splines \cite{Friedman1991MultivariateSplines} build a model of the form \begin{equation} f_M(x)=\sum_{i=m}^M \beta_i V_i(x) \end{equation}

Where $\beta_i$s are coefficients and $V_i$s are basis functions that can either be constant, be a hinge function of the form $max(0,x-const),max(0,const-x)$, or the product of multiple hinge functions. Fitting is done using a two stage approach with a forward pass adding functions and a backward pruning pass via generalised cross validation. Fitting was done using the earth package in R.

\subsection{S1 - Details of the Gaussian Process}\label{GMRFmethod}

For the Gaussian process spatial component the covariance function is chosen to be Mat\'{e}rn of smoothness $\nu = 1$: \begin{equation} k_{\theta}(s_i,s_j)=\kappa/\tau\|s_i-s_j\|\mathcal{K}_1^{(2)} (\kappa\|s_i-s_j\|), \end{equation} with $\kappa=\sqrt{2}/\rho$ an inverse range parameter (for range, $\rho$), $\tau$ a precision (i.e., inverse variance) parameter, and $\mathcal{K}_1^{(2)}$ the modified Bessel function of the second kind and order 1.  Typically, $\theta$ is defined with elements $\{\log \kappa,\ \log \tau\}$ to ensure positivity via the exponential transformation.  For computational efficiency and scalability we follow the stochastic partial differential equation (SPDE) approach\cite{Lindgren2011AnApproach} to approximate the continuous Gaussian process in Equation \ref{eq:eq1} with a discrete Gauss-Markov random field (GRMF) of sparse precision matrix, $Q_\theta$ $[=\Sigma_\theta^{-1}]$, allowing for fast computational matrix inversion \cite{Rue2009ApproximateApproximations}. To find the appropriate GMRF structure, the Mat\'{e}rn \cite{Rasmussen2006GaussianLearning} spatial component is parametrised as the finite element solution to the SPDE, $(k^{2}-\triangle)(\tau f(s))=\mathcal{W}(s)$ defined on a spherical manifold, $\mathbb{S^\mathrm{2}}\in\mathbb{R^{\mathrm{3}}}$, where $\triangle=\frac{\partial}{\partial{s_{(1)}^2}}+\frac{\partial}{\partial{s_{(2)}^2}} +\frac{\partial}{\partial{s_{(3)}^2}}$ is the Laplacian (for Cartesian coordinates $s_{(1)},s_{(2)},s_{(3)})$ and $\mathcal{W}(s)$ is a spatial white noise process\cite{Lindgren2011AnApproach}.

To extend this spatial process to a spatio-temporal process, temporal innovations are modelled by first order autoregressive dynamics: 
\begin{equation} 
f(s_i,t)=\phi f(s_i,t-1) + \omega(s_i,t),
\end{equation} 
where $|\phi|<1$ is the autoregressive coefficient and $\omega(s_i,t)$ is iid Normal.  Practically the spatio-temporal process is achieved through a Kronecker product of the (sparse) spatial SPDE precision matrix and the (sparse) autoregressive temporal precision matrix.

In this paper for readability and consistency with standard notation equations \ref{eq:eq2a} and \ref{eq:eq2b} are parameterised through the covariance matrix. As specified above in our GMRF implementation we instead specified the precision (inverse covariance) matrix. Using the precision matrix the conditional predictive distribution takes the form
\begin{eqnarray} 
z|y,\theta &\sim& N(\mu^\ast,Q^{-1\ast}) \\ 
\label{eq:eq2GMRF} \mu^\ast &=&  \mu_{(s^\prime,t^\prime)|\theta}+
Q_{(s^\prime,t^\prime),(s^\circ ,t^\circ)|\theta}^{-1}
A^{T}
Q_{y|(s^\circ ,t^\circ),\theta}
\left(y-A\mu_{(s^\circ ,t^\circ)|\theta}\right)\\ 
Q^{-1\ast} &=& Q_{(s^\prime,t^\prime),(s^\circ ,t^\circ)|\theta}^{-1}
\end{eqnarray} 
Where $Q_{(s^\prime,t^\prime),(s^\circ ,t^\circ)|\theta}=Q_{(s^\prime,t^\prime)|\theta}+A^{T}Q_{y|(s^\circ ,t^\circ),\theta}A$. In Equation \ref{eq:eq2GMRF} $A$ is introduced as a sparse observation matrix that maps the finite dimensional GMRF at locations $(s^\circ ,t^\circ)$ to functional evaluations at any spatio-temporal locations e.g prediction locations $(s^\prime,t^\prime)$ or data locations $(s,t)$, provided these locations are within a local domain.

\subsection{S1 - Bias variance derivation for a Gaussian process stacked generaliser}\label{proof} 

\begin{theorem}
Consider a function $f$ from $\mathbb{R}^N$ to $\mathbb{R}$ for which a sample $D=\{x_i,y_i\}$ exists, where $y_i = f(x_i)$ and $i=\{1,...,n\}$.  Fit $\mathcal{L}$ level 0 generalisers, $M_1(x),.., M_{\mathcal{L}}(x)$, trained on data $D$. Next define two ensembles of the $\mathcal{L}$ models, the first using a weighted mean, $\bar{M}_{cwm}(x)=\sum_{i=1}^{\mathcal{L}} \beta_i M_i(x)$, and the second as the mean of a Gaussian process, $\bar{M}_{gp}(x)=\sum_{i=1}^{\mathcal{L}} \beta_i M_i(x) + \Sigma_2\Sigma_1^{-1} \left(f(x) - \sum_{i=1}^{\mathcal{L}} \beta_i M_i(x) \right)$, with $\beta$ subject to convex combinations for both models. If the squared error is taken for both ensembles i.e. $e_{cwm}(x)= (f(x) - \bar{M}_{cwm}(x))^2$ and $e_{gp}(x)= (f(x) - \bar{M}_{gp}(x))^2$, then from the contribution of the covariance $( \mathbb{I} -\Sigma_{2}\Sigma_{1})e_{gp}(x) \leq e_{cwm}(x) \; \forall x$
\end{theorem}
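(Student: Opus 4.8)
The plan is to collapse both error expressions onto the single quantity $r(x):=f(x)-\bar{M}_{cwm}(x)=f(x)-\sum_{i=1}^{\mathcal{L}}\beta_i M_i(x)$ and then to show that the additional Gaussian-process term can only shrink it. By definition $e_{cwm}(x)=r(x)^2$. Writing $G:=\Sigma_2\Sigma_1^{-1}$ and substituting $\bar{M}_{cwm}(x)$ into the given form of $\bar{M}_{gp}(x)$ gives $\bar{M}_{gp}(x)=\bar{M}_{cwm}(x)+G\,r(x)$, hence $f(x)-\bar{M}_{gp}(x)=(\mathbb{I}-G)\,r(x)$ and $e_{gp}(x)=(\mathbb{I}-G)^2\,e_{cwm}(x)$. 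This identity carries the whole content of the theorem: it exhibits the covariance contribution of equation \ref{eq:eq2a} as a multiplicative residual-shrinkage factor $\mathbb{I}-G$ acting on the CWM error, and it already covers the degenerate case $G=0$, where no residual covariance is exploitable, the two ensembles agree, and equation \ref{eq:biasvariance} is recovered unchanged. Applying $\mathbb{I}-G$ once more yields $(\mathbb{I}-G)\,e_{gp}(x)=(\mathbb{I}-G)^3\,e_{cwm}(x)$, which is the left-hand side of the claim once $\Sigma_2\Sigma_1$ there is read as the same operator (cross-covariance times precision).

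It then remains to bound $\mathbb{I}-G$. With $\Sigma_1=\Sigma_\theta+\mathbb{I}\sigma_e^2$ as in the Gaussian-process setup, $G=\Sigma_\theta(\Sigma_\theta+\mathbb{I}\sigma_e^2)^{-1}=\mathbb{I}-\sigma_e^2(\Sigma_\theta+\mathbb{I}\sigma_e^2)^{-1}$ is the smoother (``hat'') matrix, whose eigenvalues are $\lambda_k/(\lambda_k+\sigma_e^2)\in[0,1)$ for the non-negative eigenvalues $\lambda_k$ of the positive semi-definite $\Sigma_\theta$; hence $0\preceq G\prec\mathbb{I}$ and $0\prec\mathbb{I}-G\preceq\mathbb{I}$. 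A factor with spectrum in $(0,1]$ (equivalently, a non-negative scalar $\le 1$ in the pointwise reading) cannot increase a non-negative error, so $(\mathbb{I}-G)^3\,e_{cwm}(x)\le e_{cwm}(x)$, which is exactly $(\mathbb{I}-\Sigma_2\Sigma_1)\,e_{gp}(x)\le e_{cwm}(x)$. The convex-combination constraint on $\beta$ is not used in this inequality; its sole purpose is to keep the $e_{cwm}$ term governed by the favourable bias/ambiguity split of equation \ref{eq:biasvariance}, so that the theorem chains onto the earlier result.

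The step I expect to draw scrutiny is the dimensional shortcut $f(x)-\bar{M}_{gp}(x)=(\mathbb{I}-G)\,r(x)$: the correction $\Sigma_2\Sigma_1^{-1}(y-\bar{M}_{cwm})$ of equation \ref{eq:eq2a} is an operator applied to the \emph{vector} of training residuals, not a scalar times the single residual $r(x)$, so ``$\forall x$'' needs a fixed interpretation to be literally true. I would pin this down at the outset in one of two ways: (i) state the inequality at the training points, where $r$ is the residual vector, $\bar{M}_{gp}=\bar{M}_{cwm}+G r$ holds exactly with $G=S$ the smoother matrix, $e_{cwm}$ and $e_{gp}$ are the componentwise squares, and the bound follows from $0\preceq S\prec\mathbb{I}$ via a quadratic-form estimate; or (ii) define $G$ as the effective pointwise shrinkage applied to $r(x)$, which inherits the range $[0,1)$ from $S$. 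Once the convention is fixed, the argument is just the substitution above together with the eigenvalue estimate for $S$, which is the only non-routine ingredient.
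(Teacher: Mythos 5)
Your proof is correct under the pointwise/scalar reading of $\Sigma_2\Sigma_1^{-1}$ that the theorem statement itself requires, but it takes a genuinely different route from the paper. The paper works through the Krogh--Vedelsby ambiguity decomposition: it first records $e_{cwm}(x)=\bar{\epsilon}(x)-\bar{a}(x)$ (equation \ref{eq:cwmsensemble}), then defines a modified ensemble ambiguity for the GP ensemble (equations \ref{eq:gpensem1}--\ref{eq:gpensem2}) and substitutes to arrive at $(\mathbb{I}-\Sigma_{2^*}\Sigma_{1^*}^{-1})e_{gp}(x)=\bar{\epsilon}(x)-\bar{a}(x)$, from which non-negativity of the right-hand side gives the claim. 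Your route bypasses the ambiguity machinery entirely: the one-line substitution $f(x)-\bar{M}_{gp}(x)=(\mathbb{I}-G)r(x)$ gives $e_{gp}(x)=(\mathbb{I}-G)^2 e_{cwm}(x)$, and the spectral bound $0\preceq G\prec\mathbb{I}$ for the smoother matrix does the rest. Your approach buys three things: it makes explicit a hypothesis the paper never states, namely that $\Sigma_2\Sigma_1^{-1}$ must have spectrum in $[0,1)$ (true at the training points, but not automatic at arbitrary prediction locations where $\Sigma_2$ is a cross-covariance); it yields the stronger conclusion $e_{gp}(x)\le e_{cwm}(x)$ directly, of which the stated inequality is a corollary after one more application of $\mathbb{I}-G$; and it shows the convexity constraint on $\beta$ is not needed for the inequality itself, whereas the paper's route needs $\sum_i\beta_i=1$ to invoke the ambiguity identity. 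Note that your intermediate identity $(\mathbb{I}-G)\,e_{gp}=(\mathbb{I}-G)^3 e_{cwm}$ differs from the paper's $(\mathbb{I}-G)\,e_{gp}=\bar{\epsilon}-\bar{a}=e_{cwm}$; yours is the one that actually follows from the definitions in the theorem statement, so the discrepancy reflects looseness in the paper's ambiguity bookkeeping rather than a gap on your side. What the paper's route buys in exchange is the interpretive link to equation \ref{eq:biasvariance} --- error as average individual error minus model disagreement --- which your algebra does not surface. Your flagged caveat about the vector-versus-scalar reading of the correction term is well taken and applies with equal force to the paper's own proof.
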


\begin{proof}
Consider a function $f$ from $\mathbb{R}^N$ to $\mathbb{R}$ for which a sample $D=\{x_i,y_i\}$ exists, where $y_i = f(x_i)$ and $i=\{1,...,n\}$. fit $\mathcal{L}$ level 0 generalisers, $M_1(x),..,M_L(x)$ (see algorithm \ref{alg:stgen}). 

Consider two ensembles, $\bar{M}$, of the $M_1(x),..,M_{\mathcal{L}}(x)$ models
\begin{eqnarray}
 \label{eq:model1}\bar{M}_{cwm}(x)&=&\sum_{i=1}^{\mathcal{L}} \beta_i M_i(x) \\
 \label{eq:model2}\bar{M}_{gp}(x)&=&\sum_{i=1}^{\mathcal{L}} \beta_i M_i(x) + \Sigma_2\Sigma_1^{-1} \left(f(x) - \sum_{i=1}^L \beta_i M_i(x) \right)
\end{eqnarray}

With convex combination constraints $\beta_i\geq 0\forall i$ and $\sum_{i=1}^L \beta_i = 1$.

Model 1 (equation \ref{eq:model1}), referred too here as a constrained weighted mean, is the predominant ensemble approach taken previously \cite{BreimanStackedRegressions,vanderLaan2007SuperLearner,Krogh95neuralnetwork,Sill2009Feature-WeightedStacking}. Model 2 (equation \ref{eq:model2}), is the conditional expectation of ensembling via Gaussian process regression see (equation \ref{eq:eq2a} and \cite{Rasmussen2006GaussianLearning}). To simplify notation here $\Sigma_2 = \Sigma_{(s^\prime,t^\prime),(s,t)|\theta}$ and $\Sigma_1 = \Sigma_{y|(s,t),\theta}$.

Define the squared error between the target function and each individual level zero generaliser as $\epsilon_i(x) = (f(x) - M_i(x))^2$. Define the squared error between the target function and the ensemble as $e(x)= (f(x) - \bar{M}(x))^2$. Following from \cite{Krogh95neuralnetwork} define the ambiguity of a given model as $a_i(x) = (\bar{M}(x)-M_i(x))^2$.

As derived in\cite{Krogh95neuralnetwork} the ensemble ambiguity from using a constrained weighted mean ensemble (model 1 above) subject to convex combinations is defined as the weighted sum of the individual ambiguities. 
\begin{eqnarray} 
\bar{a}(x) &=& \sum_{i=1}^{\mathcal{L}} \beta_i a_i(x) = \sum_{i=1}^{\mathcal{L}} \beta_i(\bar{M}(x)-M_i(x))^2  \\
\label{eq:cwmsensemble} \bar{a}(x) &=& \bar{\epsilon}(x) - e(x)
\end{eqnarray} 
where $ \bar{\epsilon}(x) = \sum_{i=1}^{\mathcal{L}} \beta_i \epsilon_i(x)$. Therefore the ensemble squared error when using a constrained weighted mean is $e_{cwm}(x) = \bar{\epsilon}(x) - \bar{a}(x)$.

Using the Gaussian process (model 2 above) the ensemble ambiguity is defined as.
\begin{eqnarray} 
\label{eq:gpensem1}\bar{a}(x) &=& \sum_{i=1}^{\mathcal{L}} \beta_i (\bar{M}(x)-M_i(x))^2 \nonumber \\ & +& \Sigma_{2^*}\Sigma_{1^*}^{-1} \Bigg( \sum_{i=1}^{\mathcal{L}} \beta_i  \big( f(x) - M_i(x)\big)^2 - \sum_{i=1}^{\mathcal{L}} \beta_i \big(\bar{M}(x) - M_i(x)\big)^2   \Bigg)\\ 
\label{eq:gpensem2}\bar{a}(x) &=& \sum_{i=1}^{\mathcal{L}} \beta_i a_i(x) + \Sigma_{2^*}\Sigma_{1^*}^{-1} \Bigg( \sum_{i=1}^{\mathcal{L}} \beta_i e_i(x) -  \sum_{i=1}^{\mathcal{L}} \beta_i a_i(x)\Bigg) 
\end{eqnarray} 
In equations \ref{eq:gpensem1} and \ref{eq:gpensem2} above  the covariance matrices operate on the ambiguities, and as such are suffixed with asterixes to distinguish those from equation \ref{eq:model2}. Additionally $\beta$ in equations \ref{eq:gpensem1} and \ref{eq:gpensem2} are also subject to convex combination constraints. Substituting \ref{eq:cwmsensemble} into equation \ref{eq:gpensem2} yields:
\begin{eqnarray} 
\bar{a}(x) &=& \bar{\epsilon}(x) - e(x) + \Sigma_{2^*}\Sigma_{1^*}^{-1}\Big(\bar{\epsilon}(x) - \bar{\epsilon}(x) + e(x)\Big)\\ 
\bar{a}(x) &=& \bar{\epsilon}(x) - e(x) + \Sigma_{2^*}\Sigma_{1^*}^{-1} e(x)\\ 
\Big( \mathbb{I} - \Sigma_{2^*}\Sigma_{1^*}^{-1}\Big)e_{gp}(x) &=& \bar{\epsilon}(x) - \bar{a}(x)\label{error} 
\end{eqnarray} 
The right hand side of equation \ref{error} is identical to that derived using a constrained weighted mean in equation \ref{eq:cwmsensemble}, but the left hand side error term, $e_{gp}(x)$ is augmented by $(\mathbb{I} - \Sigma_{2^*}\Sigma_{1^*}^{-1})$. Clearly $\bar{a}(x)\leq\bar{\epsilon}(x) \; \forall x$, with $\bar{a}(x)=\bar{\epsilon}(x)$ only when the ensemble equals the true target function, $\bar{M}=f(x)$, and $e(x)=0$. It follows that the left hand side of equation \ref{error} $ \Big( \mathbb{I} -\Sigma_{2^*}\Sigma_{1^*}\Big)e(x)\geq 0 \; \forall x$. Therefore from the contribution of the precision or covariance stacking using a Gaussian process approach always has a lower error than stacking via a constrained weighted mean, with the error terms being equal when the contribution of the covariance is zero. That is $ ( \mathbb{I} -\Sigma_{2^*}\Sigma_{1^*})e_{gp}(x) \leq e_{cwm}(x) \; \forall x$
\end{proof}

\subsection{Alternative stacking designs}\label{alternative}

In section \ref{stackedgen} we presented the rational for stacking and introduced a basic design (design 1 in figure \ref{fig:fig3}) where multiple level 0 generalisers are stacked through a single level 1 generaliser. For this design we proposed a Gaussian process or constrained weighted mean as the level 1 generaliser. In figure \ref{fig:fig3} we suggest two alternative stacking designs. 

In design 2, multiple level 0 generalisers are fitted and then passed through  individual level 1 generalisers before being finally combined in a level 2 generaliser. An example of this scheme would be to fit multiple level 0 generalisers using different algorithmic methods (see Appendix \ref{genmethods}) and then feed each of these into a Gaussian process regression. This design allows for the Gaussian processes to learn individual covariance structures for each level 0 algorithmic method (as opposed to a joint structure as in design 1). These level 1 Gaussian processes can then be combined through a constrained weighted mean level 2 generaliser. 

In design 3, a single level 0 generaliser is used and fed into multiple level 1 generalisers before being combined via a level 2 generaliser. An example of this scheme would be to fit a single level 0 method, such as a linear mean or random forest, and then feed this single generaliser into multiple level 1 Gaussian processes. These multiple Gaussian processes can learn different aspects of the covariance structure such as long range, short range or seasonal interactions. These level 1 generalisers can then be combined, as in design 2, through a constrained weighted mean level 2 generaliser.  

\begin{figure}[] 
\centering 
\includegraphics[width=0.6\textwidth]{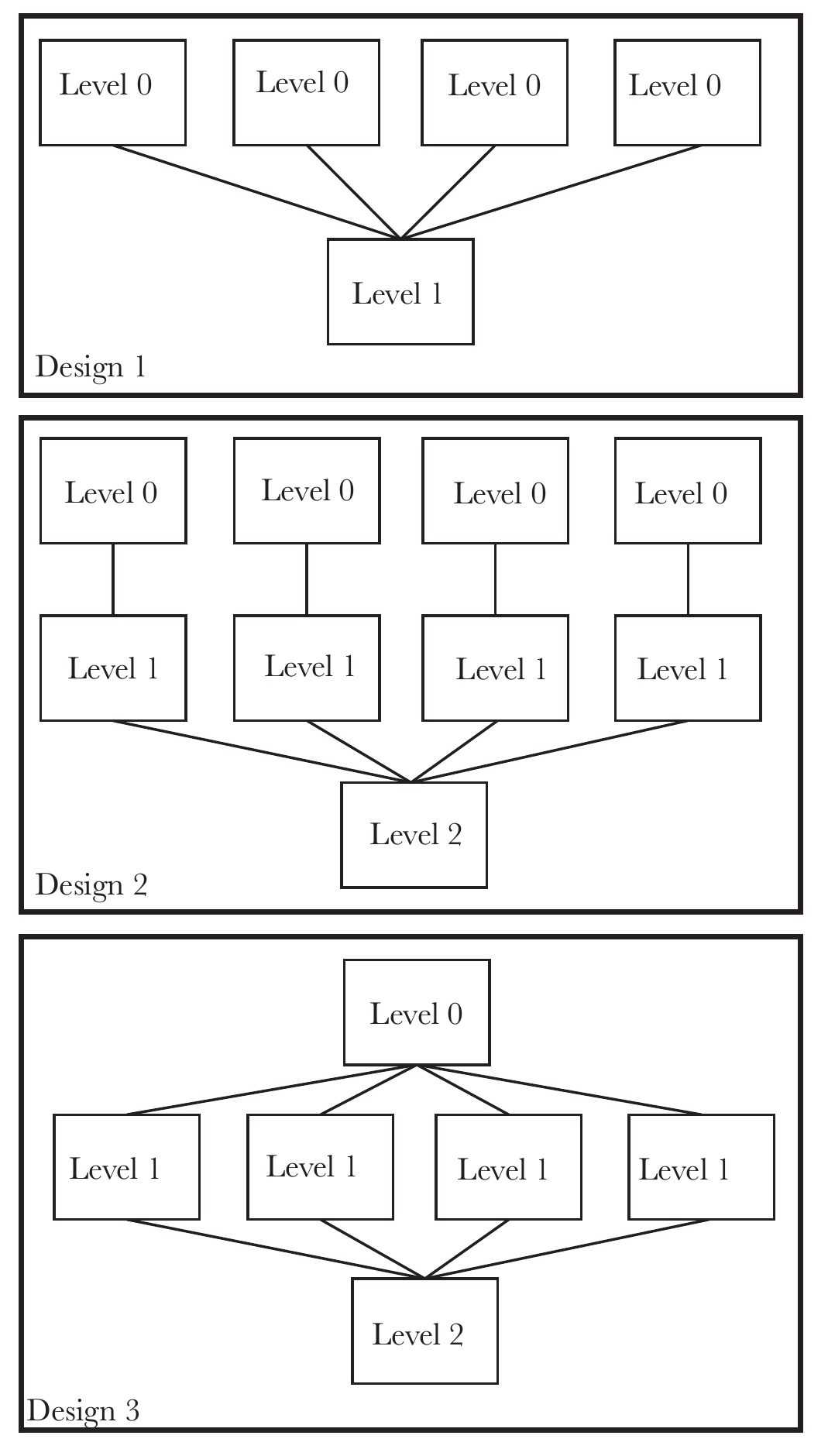}
\caption{Three suggested stacking designs. In this paper we have exclusively used design 1, where multiple level 0 generalisers are combined through a level 1 generaliser. An alternative (design 2) is to feed each level 0 generaliser into a unique level 1 generaliser and then combine them together through a level 2 generaliser. Another alternative (design 3) is to have a single level 0 generaliser feeding into multiple level 1 generalisers that are then combined through a level 2 generaliser} 
\label{fig:fig3} 
\end{figure}

\end{document}